%% file: main.tex
\documentclass[10pt,twocolumn,letterpaper]{article}

\usepackage[pagenumbers]{wacv}      

\input{preamble}

\definecolor{wacvblue}{rgb}{0.21,0.49,0.74}
\usepackage[pagebackref,breaklinks,colorlinks,allcolors=wacvblue]{hyperref}

\def\confName{WACV}
\def\confYear{2027}

\title{Measured-Subspace Consistency: A Plug-and-Play Operator for Diffusion Posterior Sampling in Accelerated MRI Reconstruction}

\author{Junhyeok Lee$^{1}$\qquad Kyu Sung Choi$^{2,3}$\\
$^{1}$Interdisciplinary Program in Cancer Biology, Seoul National University College of Medicine\\
$^{2}$Department of Radiology, Seoul National University College of Medicine, Seoul National University Hospital\\
$^{3}$Healthcare AI Research Institute, Seoul National University Hospital}

\begin{document}
\maketitle
\input{sec/0_abstract}
\input{sec/1_intro}
\input{sec/2_related}
\input{sec/3_method}

\input{sec/4_experiments}
\input{sec/5_conclusion}
{
    \small
    \bibliographystyle{ieeenat_fullname}
    \bibliography{main}
}

\end{document}

%% file: preamble.tex
\usepackage{multirow}
\usepackage{hhline}
\usepackage{rotating}
\usepackage{pifont}
\usepackage{amsthm}
\usepackage{algorithm}
\usepackage{algpseudocode}
\newcommand{\cmark}{\ding{51}}

\newtheorem{proposition}{Proposition}

\newcommand{\Pmsc}{\mathcal{P}_{\mathrm{MSC}}}
\newcommand{\Pideal}{\mathcal{P}^{*}}
\newcommand{\vmeas}{\mathcal{D}_{\mathrm{meas}}}
\newcommand{\vunmeas}{\mathcal{D}_{\mathrm{unmeas}}}

\newcommand{\x}{\mathbf{x}}
\newcommand{\y}{\mathbf{y}}

%% file: sec/0_abstract.tex
\begin{abstract}
Diffusion posterior samplers for accelerated MRI can reconstruct accurately yet still disagree on the acquired k-space across samples, placing posterior variability on coefficients the scanner has already measured. We identify this measured-subspace leakage as a physical-admissibility failure. Under a hard-constraint model it violates the measurement constraint and inflates the reported uncertainty with disagreement about coefficients the scanner has already determined. To quantify this leakage, we introduce complementary measured- and unmeasured-subspace k-space dispersion metrics (MSD/USD). We then present Measured-Subspace Consistency (MSC), a training-free terminal correction that wraps any compatible image-space posterior sampler with a standard multi-coil consistency lock. The ideal lock follows classical range/null-space data consistency. Our contribution is to repurpose it as a black-box posterior audit and correction rather than a new reconstructor or learned sampler. Theoretically, we prove that the ideal transform confines pairwise sample differences to the MRI null space and bound the residual cross-subspace coupling left by practical sensitivity-weighted implementations. Across six base samplers and two MRI anatomies, including out-of-distribution transfer where a knee prior reconstructs brain, MSC substantially reduces measured-subspace dispersion for Soft samplers (a median $16.5\times$ reduction for DPS across five brain contrasts, up to ${\sim}29\times$), while preserving unmeasured-subspace diversity and acting as a near-identity map for Consistent ones. Furthermore, MSC maintains or modestly improves PSNR/SSIM, with no retraining, retuning, or significant computational overhead. 
\end{abstract}

\begin{figure}[t]
\centering
\includegraphics[width=0.98\linewidth]{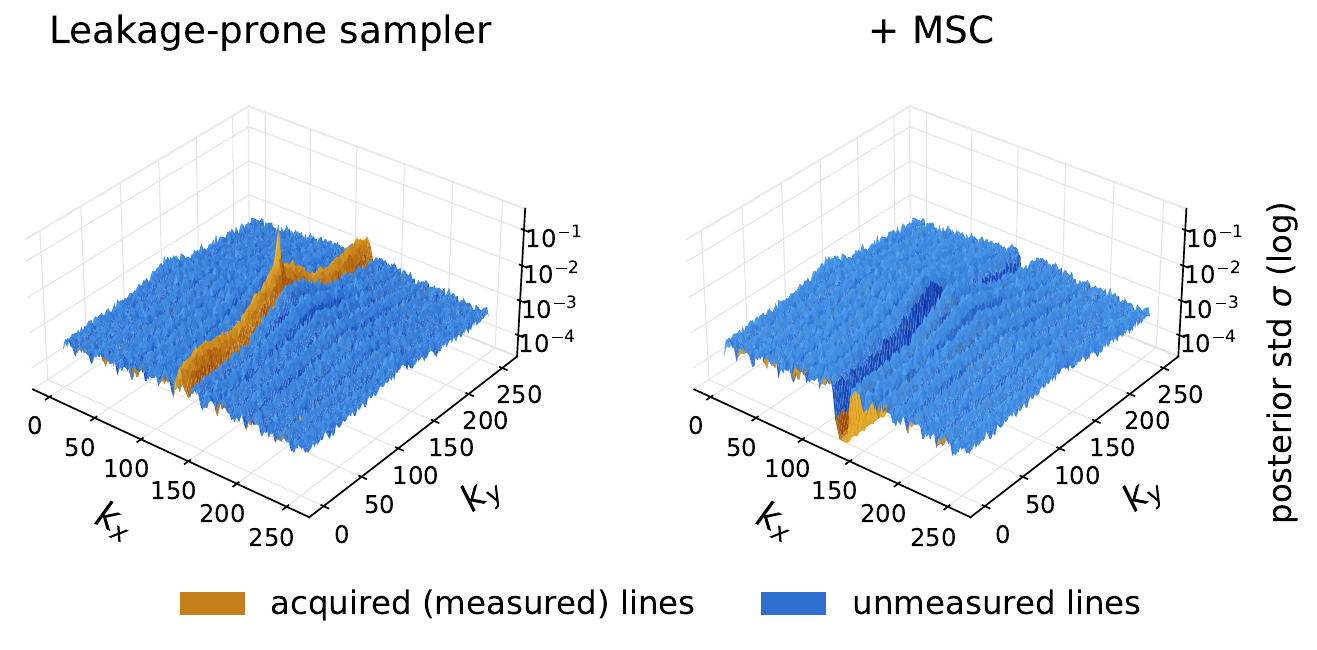}
\caption{\textbf{Measured-subspace leakage and MSC.} Per-$(k_x,k_y)$ posterior standard deviation across samples in multi-coil k-space. Orange lines are acquired, blue lines are unmeasured, and the ACS peak is clipped for visibility. MSC locks acquired lines, reducing inadmissible measured-line variability while retaining unmeasured diversity up to the practical coil-combination residual.}
\label{fig:teaser}
\end{figure}

%% file: sec/1_intro.tex
\section{Introduction}
\label{sec:intro}
Accelerated MRI reconstruction has evolved from compressed sensing with hand-crafted sparsity priors~\cite{lustig2007sparse} to physics-driven deep networks that unroll the reconstruction with learned regularizers~\cite{hammernik2018varnet,aggarwal2019modl,knoll2020deeplearning}. Most recently, score-based diffusion models~\cite{song2019ncsn,ho2020ddpm,song2021scoresde} have emerged as powerful priors for this task~\cite{jalal2021robust,chung2022scoremri,song2021solving}. Used as posterior samplers, they produce not a single reconstruction but a distribution of plausible images, enabling valuable uncertainty quantification. However, high image-space reconstruction quality (e.g., PSNR, SSIM) does not guarantee a physically valid posterior. In practice, a sampler can produce visually convincing images while stochastically varying the k-space coefficients that the scanner has \emph{already} measured. Since the acquired data are fixed observations, sample-to-sample variability on those coefficients is physically inadmissible in the hard-constraint (noise-negligible) regime we take as our reference. Under a noisy acquisition a calibrated posterior may instead carry measured-subspace variance up to the noise level. Geometrically, the acquisition operator splits image content into a \emph{measured subspace} that the scanner pins down and an orthogonal, \emph{unmeasured} (null-space) subspace that the prior alone must fill in. A physically faithful posterior should therefore express variability only within the unmeasured subspace. This \emph{measured-subspace leakage}, posterior variability that instead spills into the measured subspace, implies that a portion of the reported ``uncertainty'' is physically inadmissible rather than informative (Fig.~\ref{fig:teaser}): it portrays sample-to-sample disagreement on coefficients that the scanner has already measured and that all samples condition on identically. Given that prior work has linked spurious or missing reconstruction details to potentially misleading interpretation~\cite{antun2020instabilities,bhadra2021hallucinations,morshuis2025sdr}, we treat such inadmissible variation as a property worth diagnosing and removing, while noting that its clinical impact is not evaluated here.

Why does this leakage arise? Existing posterior samplers all tie measurement fidelity directly to their internal update rules, and \emph{how} they do so creates a practical dilemma. Conceptually, they fall into two families: \emph{Soft} samplers (e.g., DPS~\cite{chung2023dps}, $\Pi$GDM~\cite{song2023pseudoinverse}), which guide the reverse dynamics via soft data gradients but lack a hard measured-subspace projection and are therefore prone to leakage, and \emph{Consistent} samplers (e.g., DDNM~\cite{wang2023ddnm}, Score-MRI~\cite{chung2022scoremri}), which build a hard measured-subspace projection~\cite{kawar2022ddrm} into their dynamics. Consequently, practitioners are forced to choose between the high perceptual quality of Soft samplers and the physical consistency of Consistent ones. Retrofitting a pre-trained Soft sampler with hard consistency is mathematically involved, typically requiring redesigned update equations and retuned hyperparameters rather than reuse of the model as-is.

To resolve this dilemma, we introduce measured-subspace consistency (MSC), a training-free, plug-and-play terminal correction that treats the base sampler as a black box and appends a classical measured-subspace consistency lock to its final output, reducing physically inadmissible measured-subspace variation without retraining or parameter retuning. In the ideal range/null-space setting MSC confines pairwise sample differences to the unmeasured subspace. In the practical multi-coil implementation it suppresses measured-subspace dispersion up to a bounded coil-combination residual, preserving a sampler's uncertainty about what the prior had to imagine while removing its disagreement about what the scanner already measured. Empirically, across six base samplers, two anatomies, and out-of-distribution priors, MSC sharply reduces leakage on Soft bases while remaining a near-no-op on Consistent ones, at negligible cost and without degrading PSNR/SSIM.

\noindent Our contributions are summarized as follows:
\begin{itemize}[noitemsep,topsep=2pt,parsep=0pt,partopsep=0pt]
\item We formalize \emph{measured-subspace leakage} as a physical-admissibility failure of diffusion posterior samplers and introduce two complementary k-space metrics, measured (MSD) and unmeasured (USD) subspace dispersion, that quantify whether reported uncertainty falls on already-acquired or genuinely unmeasured coefficients.
\item We repurpose the classical hard measured-subspace (range/null-space) data-consistency projection (not itself novel) as a plug-and-play, training-free \emph{output-placement} audit and correction for compatible image-space posterior samplers, reducing physically inadmissible measured-subspace variation without retraining or parameter retuning.
\item We mathematically characterize the transformation, deriving its exact agreement in the ideal case and a practical, sensitivity-weighted leakage bound that reveals the role of cross-subspace coupling.
\item We demonstrate the selective behavior of MSC across six base samplers, showing that it systematically reduces k-space leakage on Soft samplers while preserving unmeasured diversity and maintaining or improving reconstruction quality across multiple anatomies, masks, and priors.
\end{itemize}

%% file: sec/2_related.tex
\section{Related Work}
\label{sec:related}

\subsection{Diffusion Posterior Samplers: Soft vs. Consistent}
Posterior samplers differ in how they enforce data fidelity. \emph{Soft} samplers guide the diffusion prior toward the measurement with soft gradients or likelihood approximations: CSGM-Langevin~\cite{jalal2021robust} runs annealed Langevin dynamics with a linear measurement gradient, DPS~\cite{chung2023dps} and $\Pi$GDM~\cite{song2023pseudoinverse} steer general diffusion solvers using data term gradients, and Nila-DC~\cite{nila2024} dynamically scales a soft data-consistency gradient according to noise levels. While they yield high perceptual quality, their stochastic updates can perturb acquired k-space coefficients, resulting in measured-subspace leakage. Conversely, \emph{Consistent} samplers enforce fidelity inside the reverse update through different mechanisms: Score-MRI~\cite{chung2022scoremri,song2021solving} interleaves reverse SDE steps with POCS-style k-space replacement (the in-loop analogue of the terminal paste-back we use), DDRM~\cite{kawar2022ddrm} and DDNM~\cite{wang2023ddnm} apply an explicit SVD or range/null-space split to pin measured range components, and MCG~\cite{chung2022mcg} uses a manifold-constrained gradient correction that our audit finds effectively measured-consistent. Since these mechanisms are integrated in-loop, they cannot wrap arbitrary soft samplers without modifying their internal dynamics.

\subsection{MRI-Domain Solvers and Classical Projections}
Other methods enforce consistency inside specialized domains: SPIRiT-Diffusion~\cite{cao2024spirit} casts the self-consistency driven SPIRiT constraint~\cite{lustig2010spirit} as a k-space diffusion prior, DiffuseRecon~\cite{peng2022diffuserecon} injects k-space measurements during coarse-to-fine reverse diffusion, and MCLC~\cite{mclc2026} stabilizes latent-diffusion solvers with a measurement-consistent corrector. MSC instead operates in image space, controlling multi-coil dispersion without latent inversion. Closest to our decoupling of data consistency from the sampler's dynamics, the Decomposed Diffusion Sampler~\cite{chung2024dds} runs conjugate-gradient data consistency in a Krylov subspace outside the per-step denoiser, and ReSample~\cite{song2024resample} enforces hard data consistency for latent-diffusion solvers. MSC differs by applying a \emph{single terminal} multi-coil paste-back to a black-box sampler, with no inner CG/Krylov loop and no per-step interleaving. A related but orthogonal line is SDR~\cite{morshuis2025sdr}, whose diversity-seeking generation surfaces rare findings, whereas we remove physically inadmissible variation.

The projection operator itself is classical: it is a multi-coil POCS paste-back (POCSENSE~\cite{samsonov2004pocsense}), and relates more broadly to parallel-imaging methods that exploit coil structure to recover unsampled data (GRAPPA~\cite{griswold2002grappa}, SPIRiT~\cite{lustig2010spirit}). Applied once at the sampler output, MSC is a terminal multi-coil data-consistency paste-back. It should not be read as a new MRI projection operator. The contribution is orthogonal to the operator: (i) identifying measured-subspace variance as a failure mode of diffusion posterior samples rather than merely a reconstruction-error issue, (ii) introducing MSD/USD metrics that separate inadmissible measured variation from unmeasured posterior diversity, (iii) showing that a terminal classical projection is sufficient to correct Soft samplers without retraining or retuning and without collapsing null-space diversity, and (iv) explaining why the practical multi-coil realization leaves a bounded residual floor after sensitivity-weighted coil combination.

\subsection{Relation to Uncertainty Quantification}
Uncertainty quantification (UQ) for MRI reconstruction reports image-space uncertainty, either post hoc (VAE variance~\cite{edupuganti2021uncertainty}, Bayesian epistemic uncertainty~\cite{narnhofer2022bayesian}) or as the sample variance of the diffusion posterior samplers we study~\cite{jalal2021robust,chung2022scoremri}, and typically validates it against ground truth through calibration~\cite{guo2017calibration,kuleshov2018calibrated} or distribution-free conformal coverage, both for imaging inverse problems~\cite{angelopoulos2022im2im} and for diffusion models~\cite{teneggi2023trust}. MSC instead poses an orthogonal, ground-truth-free question: does the reported uncertainty vary coefficients the scanner already measured? Measured-subspace admissibility is in fact a necessary condition for a calibrated posterior (samples conditioned on the same fixed measurement cannot legitimately disagree on it), yet unlike calibration it needs no ground truth or held-out data. MSC is therefore not itself a UQ method but a physical-admissibility check on posterior dispersion that complements these image-space methods. Its MSD/USD metrics could in turn audit the validity of their variance maps.

%% file: sec/3_method.tex
\section{Method}
\label{sec:method}

\begin{figure*}[t]
\centering
\includegraphics[width=0.98\linewidth]{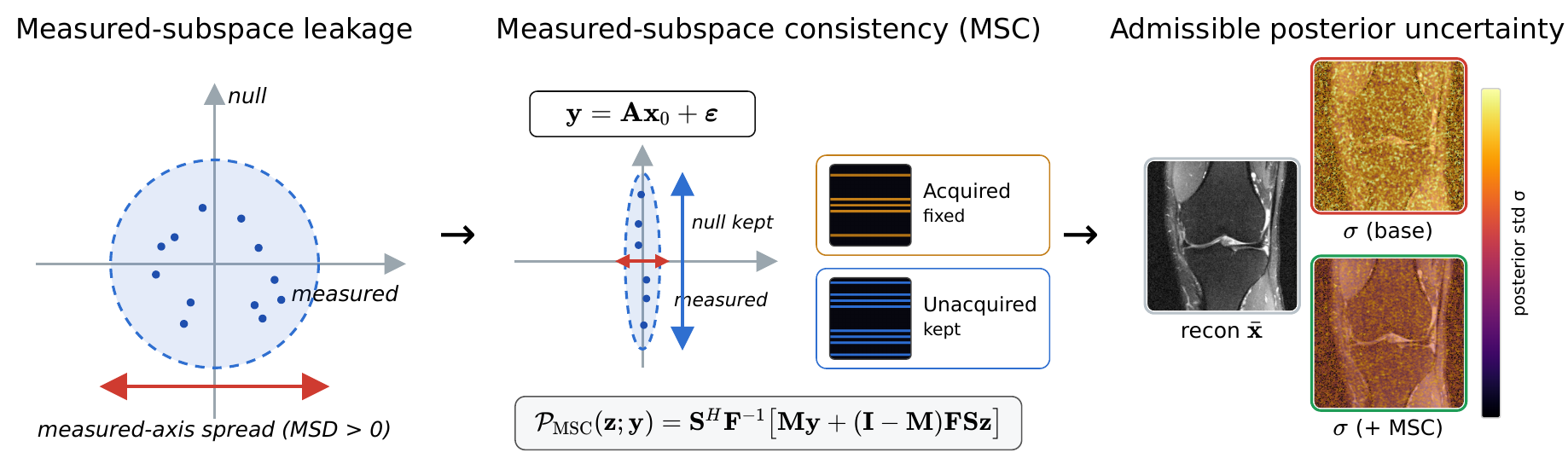}
\caption{\textbf{Overview of measured-subspace consistency.} Left: leakage is posterior spread along measured coordinates. Middle: MSC replaces acquired multi-coil k-space coefficients of proposal $\mathbf{z}$ with observations $\mathbf{y}$, keeps unacquired coefficients, and maps back with $\mathbf{S}^{H}\mathbf{F}^{-1}$. Right: MSD/USD audit the resulting measured and unmeasured dispersion.}
\label{fig:method}
\end{figure*}

\subsection{Measured-Subspace Leakage Problem}
\label{sec:prelim}
We consider 2D multi-coil Cartesian accelerated MRI, where unitary Fourier transform $\mathbf{F}$ and binary mask $\mathbf{M}$ induce an orthogonal acquired/unacquired split. Non-Cartesian trajectories are left to future work. With $C$ sensitivity-weighted coils, the forward model is
\begin{equation}
\y=\mathbf{A}\x+\boldsymbol{\varepsilon}, \qquad \mathbf{A}=\mathbf{M}\mathbf{F}\mathbf{S},
\label{eq:forward}
\end{equation}
where $\x\in\mathbb{C}^{N}$ is the complex image, $\mathbf{S}$ applies coil sensitivities, $\boldsymbol{\varepsilon}$ is acquisition noise, and $\y=\mathbf{M}\y$ is zero-filled acquired multi-coil k-space. The adjoint $\mathbf{A}^{H}=\mathbf{S}^{H}\mathbf{F}^{-1}\mathbf{M}$ maps k-space to image space by inverse Fourier transform and sensitivity-weighted coil combination. Since $\mathbf{M}$ discards unsampled locations, $\mathbf{A}$ is rank-deficient: sample differences split into a measured component constrained by the scanner and a null-space component $\mathcal{N}(\mathbf{A}) = \{\x : \mathbf{M}\mathbf{F}\mathbf{S}\x = \mathbf{0}\}$ about which the measurement is silent. In this hard-constraint (noise-negligible) regime, only the latter represents legitimate posterior uncertainty. Under measurement noise the measured component additionally admits variance up to the noise level.

Against this model, diffusion posterior samplers denoise from noise while guiding each image-space iterate $\x_t$ toward $\y$, either through a \emph{soft} data term or a \emph{hard} measured-subspace projection. Running the stochastic reverse process $L$ times yields samples $\{\x^{(\ell)}\}_{\ell=1}^{L}$ whose spread is commonly reported as uncertainty.

Not all dispersion is physically justified. Variation on unacquired coefficients reflects ambiguity. Variation on acquired coefficients means samples disagree on recorded data, which we term measured-subspace leakage. To quantify it, let $k^{(\ell)}_{c,i}$ be the multi-coil k-space coefficient of sample $\ell$ at coil $c$ and position $i$, and define the complex sample standard deviation across chains as
\begin{equation}
\operatorname{Std}_{\ell}\!\big[k^{(\ell)}_{c,i}\big]=\Big(\tfrac{1}{L-1}\sum_{\ell=1}^{L}\big|k^{(\ell)}_{c,i}-\bar{k}_{c,i}\big|^{2}\Big)^{1/2},
\quad \bar{k}_{c,i}=\tfrac{1}{L}\sum_{\ell=1}^{L}k^{(\ell)}_{c,i}.
\label{eq:std}
\end{equation}
We then define the \emph{measured-subspace dispersion} (MSD, denoted $\vmeas$) and \emph{unmeasured-subspace dispersion} (USD, denoted $\vunmeas$) as the mean complex standard deviation over the measured ($M_i=1$) and unmeasured ($M_i=0$) regions, respectively:
\begin{equation}
\begin{aligned}
\vmeas &= \operatorname{Avg}_{c, i: M_i=1} \operatorname{Std}_{\ell}\!\left[k^{(\ell)}_{c,i}\right], \\
\vunmeas &= \operatorname{Avg}_{c, i: M_i=0} \operatorname{Std}_{\ell}\!\left[k^{(\ell)}_{c,i}\right].
\end{aligned}
\end{equation}
A measured-subspace consistent sampler should have low MSD without suppressing USD. Reconstruction quality is evaluated separately with NMSE, PSNR, and SSIM~\cite{wang2004ssim} on the chain-mean magnitude after one least-squares scale match. Two details matter. First, MSD/USD partition multi-coil k-space into acquired/unacquired regions. In the single-coil unitary case this coincides with the image-space range/null-space split $\mathcal{R}(\mathbf{A}^{H})/\mathcal{N}(\mathbf{A})$. Second, dispersion is computed on the \emph{final returned} iterate by re-encoding with $\mathbf{F}\mathbf{S}$. For MSC, this includes the cross-subspace coupling floor of Prop.~\ref{prop:bound}. We therefore restrict cross-sampler MSD comparisons to scale-invariant ratios, while absolute MSD ($\times10^{-4}$) remains comparable within a sampler/contrast.

\subsection{Ideal Consistency Operator}
MSC uses a training-free terminal lock to audit and correct samplers that expose an image-space output. We model a compatible base sampler as a map $\mathbf{z}_{t-1} = S_t(\x_t; \y)$ (Figure~\ref{fig:method}), where $S_t$ may use soft data guidance or in-loop hard projection. Compatibility only requires an exposed image-space iterate, an applicable forward operator $\mathbf{A}$, and the ability to append a final projection. K-space-native, latent-only, or exact-likelihood samplers may still be diagnosed with MSD/USD, but are outside the wrapper interface.

To establish the mathematical foundation of our method, we first analyze measured-subspace consistency in the ideal, range-consistent setting ($\y \in \mathcal{R}(\mathbf{A})$). In this case, the consistency operator replaces the measured component of the iterate with the acquired scanner data:
\begin{equation}
\Pideal(\mathbf{z};\y)=\mathbf{z}+\mathbf{A}^{+}(\y-\mathbf{A}\mathbf{z}),
\label{eq:pstar}
\end{equation}
with $\mathbf{A}^{+}$ the pseudoinverse of $\mathbf{A}=\mathbf{M}\mathbf{F}\mathbf{S}$. $\Pideal$ overwrites the measured component with $\y$ and leaves the null-space component untouched. This is the same classical range/null-space principle used by hard data-consistency and null-space diffusion solvers~\cite{kawar2022ddrm,wang2023ddnm}. Here it is used at the output to analyze and correct posterior leakage.

\begin{proposition}[Exact measured-subspace agreement]
\label{prop:exact}
For any iterates $\mathbf{z}_1,\mathbf{z}_2$, the transform \eqref{eq:pstar} satisfies
\begin{equation}
\Pideal(\mathbf{z}_1;\y)-\Pideal(\mathbf{z}_2;\y)=(\mathbf{I}-\mathbf{A}^{+}\mathbf{A})(\mathbf{z}_1-\mathbf{z}_2)\in\mathcal{N}(\mathbf{A}),
\label{eq:propexact}
\end{equation}
so that $\mathbf{A}\big(\Pideal(\mathbf{z}_1;\y)-\Pideal(\mathbf{z}_2;\y)\big)=\mathbf{0}$. Moreover, $\mathbf{A}\mathbf{z}=\y$ implies $\Pideal(\mathbf{z};\y)=\mathbf{z}$.
\end{proposition}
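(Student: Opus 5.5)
The argument is a direct linear-algebra computation that uses only the four Moore--Penrose identities for $\mathbf{A}^{+}$, in particular $\mathbf{A}\mathbf{A}^{+}\mathbf{A}=\mathbf{A}$. No property specific to MRI (unitarity of $\mathbf{F}$, the structure of $\mathbf{M}$ or $\mathbf{S}$) is needed. We fix $\y$ and write $\Pideal(\mathbf{z};\y)=(\mathbf{I}-\mathbf{A}^{+}\mathbf{A})\mathbf{z}+\mathbf{A}^{+}\y$, which exhibits $\Pideal$ as an affine map whose linear part is $\mathbf{I}-\mathbf{A}^{+}\mathbf{A}$.

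\textbf{Step 1 (difference identity).} Subtracting $\Pideal(\mathbf{z}_2;\y)$ from $\Pideal(\mathbf{z}_1;\y)$, the $\y$-dependent term $\mathbf{A}^{+}\y$ cancels. This leaves
\begin{equation*}
\Pideal(\mathbf{z}_1;\y)-\Pideal(\mathbf{z}_2;\y)=(\mathbf{I}-\mathbf{A}^{+}\mathbf{A})(\mathbf{z}_1-\mathbf{z}_2),
\end{equation*}
which is the equality in \eqref{eq:propexact}. This step uses only linearity of $\mathbf{A}$ and $\mathbf{A}^{+}$.

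\textbf{Step 2 (membership in the null space).} Apply $\mathbf{A}$ to the right-hand side. Using $\mathbf{A}\mathbf{A}^{+}\mathbf{A}=\mathbf{A}$ gives $\mathbf{A}(\mathbf{I}-\mathbf{A}^{+}\mathbf{A})=\mathbf{A}-\mathbf{A}=\mathbf{0}$. Hence the difference lies in $\mathcal{N}(\mathbf{A})$, and the stated corollary $\mathbf{A}\big(\Pideal(\mathbf{z}_1;\y)-\Pideal(\mathbf{z}_2;\y)\big)=\mathbf{0}$ follows at once. I would also note, for interpretation, that $\mathbf{I}-\mathbf{A}^{+}\mathbf{A}$ is the orthogonal projector onto $\mathcal{N}(\mathbf{A})$, since $\mathbf{A}^{+}\mathbf{A}$ is Hermitian and idempotent. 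This fact is not required for the claim.

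\textbf{Step 3 (fixed points).} If $\mathbf{A}\mathbf{z}=\y$, then $\y-\mathbf{A}\mathbf{z}=\mathbf{0}$ in \eqref{eq:pstar}, so $\Pideal(\mathbf{z};\y)=\mathbf{z}$ trivially.

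There is no real obstacle here. The only point needing care is that Step 2 relies on the pseudoinverse identity rather than on invertibility, because $\mathbf{A}=\mathbf{M}\mathbf{F}\mathbf{S}$ is rank-deficient. The range-consistency assumption $\y\in\mathcal{R}(\mathbf{A})$ is not needed for the difference statement. It matters only for the complementary fact that $\mathbf{A}\Pideal(\mathbf{z};\y)=\mathbf{A}\mathbf{A}^{+}\y=\y$, i.e.\ that the output actually matches the data. I would record this fact in a remark, since it is what makes the measured component equal to $\y$ rather than merely common across samples.
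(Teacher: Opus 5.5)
Your proposal is correct and matches the paper's proof step for step: subtracting the two outputs cancels $\mathbf{A}^{+}\y$, the identity $\mathbf{A}\mathbf{A}^{+}\mathbf{A}=\mathbf{A}$ places the difference in $\mathcal{N}(\mathbf{A})$, and the fixed-point claim follows from $\y-\mathbf{A}\mathbf{z}=\mathbf{0}$. Your side remarks are also correct but go beyond what the paper proves: that $\mathbf{I}-\mathbf{A}^{+}\mathbf{A}$ is the orthogonal projector onto $\mathcal{N}(\mathbf{A})$, and that $\y\in\mathcal{R}(\mathbf{A})$ is needed only for $\mathbf{A}\Pideal(\mathbf{z};\y)=\y$.
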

\begin{proof}
Subtracting two instances of \cref{eq:pstar}, the data terms $\mathbf{A}^{+}\y$ cancel, leaving $(\mathbf{I}-\mathbf{A}^{+}\mathbf{A})(\mathbf{z}_1-\mathbf{z}_2)$. Since $\mathbf{A}(\mathbf{I}-\mathbf{A}^{+}\mathbf{A})=\mathbf{A}-\mathbf{A}\mathbf{A}^{+}\mathbf{A}=\mathbf{0}$, this difference lies in $\mathcal{N}(\mathbf{A})$. Finally, if $\mathbf{A}\mathbf{z}=\y$ then $\y-\mathbf{A}\mathbf{z}=\mathbf{0}$, so $\Pideal(\mathbf{z};\y)=\mathbf{z}$.
\end{proof}
Proposition~\ref{prop:exact} implies zero ideal MSD with preserved null-space diversity, and the fixed-point property leaves already consistent samples unchanged. This explains why MSC should suppress Soft-sampler leakage while acting as a near-no-op on Consistent bases, up to the practical residual below.

\begin{table*}[t]
\centering
\caption{\textbf{In-domain knee reconstruction.} Single PD knee prior across masks, accelerations, and PD/PDFS contrasts. Metrics: MSD ($\times10^{-4}$, $\downarrow$), PSNR (dB, $\uparrow$), SSIM ($\uparrow$). Better of base/+MSC in \textbf{bold} (display ties unbolded).}
\label{tab:knee}
\scriptsize
\renewcommand{\arraystretch}{0.85}
\setlength{\tabcolsep}{1.1pt}
\begin{tabular}{l|c|l|*{12}{c}|*{12}{c}}
\toprule
\textbf{Sampling Patterns} & $R$ & Metric & \multicolumn{12}{c|}{\textbf{PD}} & \multicolumn{12}{c}{\textbf{PDFS}}\\
\cmidrule(lr){4-15}\cmidrule(lr){16-27}
& & & \multicolumn{6}{c}{\emph{Soft}} & \multicolumn{6}{c|}{\emph{Consistent}} & \multicolumn{6}{c}{\emph{Soft}} & \multicolumn{6}{c}{\emph{Consistent}} \\
\cmidrule(lr){4-9}\cmidrule(lr){10-15}\cmidrule(lr){16-21}\cmidrule(lr){22-27}
& & & \multicolumn{2}{c}{DPS} & \multicolumn{2}{c}{Nila-DC} & \multicolumn{2}{c}{$\Pi$GDM} & \multicolumn{2}{c}{DDNM} & \multicolumn{2}{c}{MCG} & \multicolumn{2}{c|}{Score} & \multicolumn{2}{c}{DPS} & \multicolumn{2}{c}{Nila-DC} & \multicolumn{2}{c}{$\Pi$GDM} & \multicolumn{2}{c}{DDNM} & \multicolumn{2}{c}{MCG} & \multicolumn{2}{c}{Score} \\
\cmidrule(lr){4-5}\cmidrule(lr){6-7}\cmidrule(lr){8-9}\cmidrule(lr){10-11}\cmidrule(lr){12-13}\cmidrule(lr){14-15}\cmidrule(lr){16-17}\cmidrule(lr){18-19}\cmidrule(lr){20-21}\cmidrule(lr){22-23}\cmidrule(lr){24-25}\cmidrule(lr){26-27}
& & & base & MSC & base & MSC & base & MSC & base & MSC & base & MSC & base & \multicolumn{1}{c|}{MSC} & base & MSC & base & MSC & base & MSC & base & MSC & base & MSC & base & MSC \\
\midrule
\multirow{9}{*}{Uniform 1D} & \multirow{3}{*}{$4\times$} & MSD & {\textcolor{black!55}{11.6}} & \textbf{4.9} & {\textcolor{black!55}{23.1}} & \textbf{8.6} & {\textcolor{black!55}{42.4}} & \textbf{13.6} & {\textcolor{black!55}{7.3}} & \textbf{5.6} & {\textcolor{black!55}{4.5}} & \textbf{3.8} & {\textcolor{black!55}{5.5}} & \textbf{4.5} & {\textcolor{black!55}{13.9}} & \textbf{7.1} & {\textcolor{black!55}{22.0}} & \textbf{7.9} & {\textcolor{black!55}{37.9}} & \textbf{9.6} & {\textcolor{black!55}{9.2}} & \textbf{7.1} & {\textcolor{black!55}{6.1}} & \textbf{5.2} & {\textcolor{black!55}{8.0}} & \textbf{6.7} \\
 & & PSNR & {\textcolor{black!55}{37.4}} & \textbf{37.5} & {\textcolor{black!55}{33.6}} & \textbf{35.0} & {\textcolor{black!55}{24.6}} & \textbf{27.2} & {\textcolor{black!55}{36.4}} & \textbf{36.6} & 37.0 & 37.0 & {\textcolor{black!55}{37.1}} & \textbf{37.2} & {\textcolor{black!55}{33.1}} & \textbf{33.5} & {\textcolor{black!55}{31.7}} & \textbf{33.2} & {\textcolor{black!55}{23.6}} & \textbf{27.3} & \textbf{33.4} & {\textcolor{black!55}{33.3}} & \textbf{33.4} & {\textcolor{black!55}{33.2}} & \textbf{33.3} & {\textcolor{black!55}{33.1}} \\
 & & SSIM & .94 & .94 & {\textcolor{black!55}{.85}} & \textbf{.91} & {\textcolor{black!55}{.66}} & \textbf{.78} & .93 & .93 & .93 & .93 & .94 & .94 & {\textcolor{black!55}{.81}} & \textbf{.84} & {\textcolor{black!55}{.75}} & \textbf{.83} & {\textcolor{black!55}{.52}} & \textbf{.72} & \textbf{.84} & {\textcolor{black!55}{.83}} & \textbf{.84} & .83 & .83 & {\textcolor{black!55}{.83}} \\
\cmidrule(lr){2-27}
 & \multirow{3}{*}{$8\times$} & MSD & {\textcolor{black!55}{30.9}} & \textbf{4.7} & {\textcolor{black!55}{26.8}} & \textbf{8.2} & {\textcolor{black!55}{57.4}} & \textbf{14.4} & {\textcolor{black!55}{6.7}} & \textbf{5.2} & {\textcolor{black!55}{4.7}} & \textbf{4.0} & {\textcolor{black!55}{5.3}} & \textbf{4.4} & {\textcolor{black!55}{12.0}} & \textbf{6.3} & {\textcolor{black!55}{25.0}} & \textbf{7.6} & {\textcolor{black!55}{52.0}} & \textbf{10.1} & {\textcolor{black!55}{7.2}} & \textbf{5.8} & {\textcolor{black!55}{5.3}} & \textbf{4.7} & {\textcolor{black!55}{6.3}} & \textbf{5.3} \\
 & & PSNR & 30.9 & 30.9 & {\textcolor{black!55}{29.6}} & \textbf{29.8} & {\textcolor{black!55}{22.8}} & \textbf{24.0} & 29.9 & 29.9 & 29.6 & 29.6 & 29.7 & 29.7 & 26.6 & 26.6 & {\textcolor{black!55}{26.5}} & \textbf{26.7} & {\textcolor{black!55}{22.8}} & \textbf{24.8} & {\textcolor{black!55}{26.1}} & \textbf{26.2} & \textbf{26.6} & 26.6 & 26.6 & {\textcolor{black!55}{26.6}} \\
 & & SSIM & .85 & .85 & {\textcolor{black!55}{.80}} & \textbf{.83} & {\textcolor{black!55}{.62}} & \textbf{.67} & .83 & .83 & .83 & .83 & .83 & .83 & {\textcolor{black!55}{.65}} & \textbf{.66} & {\textcolor{black!55}{.61}} & \textbf{.66} & {\textcolor{black!55}{.49}} & \textbf{.61} & {\textcolor{black!55}{.66}} & \textbf{.67} & .68 & .68 & .68 & .68 \\
\cmidrule(lr){2-27}
 & \multirow{3}{*}{$12\times$} & MSD & {\textcolor{black!55}{30.7}} & \textbf{6.4} & {\textcolor{black!55}{32.4}} & \textbf{11.8} & {\textcolor{black!55}{66.9}} & \textbf{20.1} & {\textcolor{black!55}{8.9}} & \textbf{7.1} & {\textcolor{black!55}{6.7}} & \textbf{5.9} & {\textcolor{black!55}{7.2}} & \textbf{6.2} & {\textcolor{black!55}{14.8}} & \textbf{8.5} & {\textcolor{black!55}{28.9}} & \textbf{10.8} & {\textcolor{black!55}{59.0}} & \textbf{15.5} & {\textcolor{black!55}{9.2}} & \textbf{7.4} & {\textcolor{black!55}{7.2}} & \textbf{6.4} & {\textcolor{black!55}{8.2}} & \textbf{7.2} \\
 & & PSNR & 28.4 & 28.4 & {\textcolor{black!55}{27.3}} & \textbf{27.4} & {\textcolor{black!55}{21.0}} & \textbf{21.6} & 27.0 & 27.0 & 26.2 & 26.2 & 26.8 & 26.8 & 24.2 & 24.2 & {\textcolor{black!55}{24.3}} & \textbf{24.4} & {\textcolor{black!55}{22.0}} & \textbf{23.2} & 24.3 & 24.3 & \textbf{24.7} & 24.7 & 24.7 & {\textcolor{black!55}{24.7}} \\
 & & SSIM & .81 & .81 & {\textcolor{black!55}{.76}} & \textbf{.78} & {\textcolor{black!55}{.57}} & \textbf{.60} & .78 & .78 & .76 & .76 & {\textcolor{black!55}{.77}} & \textbf{.78} & .60 & .60 & {\textcolor{black!55}{.56}} & \textbf{.60} & {\textcolor{black!55}{.46}} & \textbf{.55} & .61 & .61 & {\textcolor{black!55}{.62}} & \textbf{.63} & .63 & .63 \\
\midrule
\multirow{6}{*}{Gaussian 1D} & \multirow{3}{*}{$8\times$} & MSD & {\textcolor{black!55}{34.8}} & \textbf{3.9} & {\textcolor{black!55}{28.0}} & \textbf{7.7} & {\textcolor{black!55}{61.4}} & \textbf{14.9} & {\textcolor{black!55}{5.8}} & \textbf{4.5} & {\textcolor{black!55}{3.9}} & \textbf{3.3} & {\textcolor{black!55}{4.3}} & \textbf{3.6} & {\textcolor{black!55}{11.8}} & \textbf{5.6} & {\textcolor{black!55}{28.1}} & \textbf{7.6} & {\textcolor{black!55}{60.0}} & \textbf{12.3} & {\textcolor{black!55}{6.9}} & \textbf{5.4} & {\textcolor{black!55}{4.7}} & \textbf{4.1} & {\textcolor{black!55}{5.5}} & \textbf{4.7} \\
 & & PSNR & 33.2 & 33.2 & {\textcolor{black!55}{31.9}} & \textbf{32.5} & {\textcolor{black!55}{24.5}} & \textbf{26.6} & 32.7 & 32.7 & 32.8 & 32.8 & {\textcolor{black!55}{32.9}} & \textbf{33.0} & {\textcolor{black!55}{31.4}} & \textbf{31.5} & {\textcolor{black!55}{30.6}} & \textbf{31.2} & {\textcolor{black!55}{23.6}} & \textbf{26.6} & 31.4 & 31.4 & 31.5 & 31.5 & \textbf{31.6} & {\textcolor{black!55}{31.5}} \\
 & & SSIM & .88 & .88 & {\textcolor{black!55}{.84}} & \textbf{.87} & {\textcolor{black!55}{.66}} & \textbf{.74} & .88 & .88 & .88 & .88 & .88 & .88 & {\textcolor{black!55}{.77}} & \textbf{.78} & {\textcolor{black!55}{.72}} & \textbf{.77} & {\textcolor{black!55}{.51}} & \textbf{.67} & .78 & .78 & .78 & .78 & .78 & .78 \\
\cmidrule(lr){2-27}
 & \multirow{3}{*}{$12\times$} & MSD & {\textcolor{black!55}{28.7}} & \textbf{5.5} & {\textcolor{black!55}{34.3}} & \textbf{11.9} & {\textcolor{black!55}{76.5}} & \textbf{24.0} & {\textcolor{black!55}{8.5}} & \textbf{6.5} & {\textcolor{black!55}{5.7}} & \textbf{4.9} & {\textcolor{black!55}{6.4}} & \textbf{5.4} & {\textcolor{black!55}{14.1}} & \textbf{8.1} & {\textcolor{black!55}{32.8}} & \textbf{12.0} & {\textcolor{black!55}{68.4}} & \textbf{19.9} & {\textcolor{black!55}{9.5}} & \textbf{7.6} & {\textcolor{black!55}{6.7}} & \textbf{5.9} & {\textcolor{black!55}{7.8}} & \textbf{6.7} \\
 & & PSNR & 31.8 & 31.8 & {\textcolor{black!55}{31.0}} & \textbf{31.2} & {\textcolor{black!55}{22.3}} & \textbf{23.6} & 31.3 & 31.3 & 31.3 & 31.3 & 31.5 & 31.5 & 29.6 & 29.6 & {\textcolor{black!55}{29.2}} & \textbf{29.6} & {\textcolor{black!55}{22.6}} & \textbf{24.3} & 29.3 & 29.3 & 29.4 & 29.4 & \textbf{29.6} & {\textcolor{black!55}{29.5}} \\
 & & SSIM & .86 & .86 & {\textcolor{black!55}{.82}} & \textbf{.84} & {\textcolor{black!55}{.61}} & \textbf{.66} & .85 & .85 & .85 & .85 & .85 & .85 & {\textcolor{black!55}{.71}} & \textbf{.72} & {\textcolor{black!55}{.69}} & \textbf{.72} & {\textcolor{black!55}{.48}} & \textbf{.60} & {\textcolor{black!55}{.72}} & \textbf{.73} & .73 & .73 & .73 & .73 \\
\midrule
\multirow{6}{*}{Gaussian 2D} & \multirow{3}{*}{$8\times$} & MSD & {\textcolor{black!55}{24.0}} & \textbf{5.9} & {\textcolor{black!55}{32.7}} & \textbf{15.1} & {\textcolor{black!55}{77.1}} & \textbf{29.6} & {\textcolor{black!55}{10.0}} & \textbf{7.3} & {\textcolor{black!55}{6.0}} & \textbf{4.8} & {\textcolor{black!55}{7.1}} & \textbf{5.6} & {\textcolor{black!55}{14.9}} & \textbf{9.0} & {\textcolor{black!55}{34.1}} & \textbf{15.2} & {\textcolor{black!55}{70.7}} & \textbf{23.4} & {\textcolor{black!55}{12.1}} & \textbf{9.0} & {\textcolor{black!55}{7.5}} & \textbf{6.2} & {\textcolor{black!55}{9.2}} & \textbf{7.4} \\
 & & PSNR & {\textcolor{black!55}{35.8}} & \textbf{35.9} & {\textcolor{black!55}{33.7}} & \textbf{34.8} & {\textcolor{black!55}{26.2}} & \textbf{31.4} & {\textcolor{black!55}{35.5}} & \textbf{35.6} & 35.7 & 35.7 & 35.8 & 35.8 & \textbf{33.6} & {\textcolor{black!55}{33.5}} & {\textcolor{black!55}{32.3}} & \textbf{33.2} & {\textcolor{black!55}{24.3}} & \textbf{30.3} & \textbf{33.6} & {\textcolor{black!55}{33.5}} & \textbf{33.6} & 33.5 & 33.5 & {\textcolor{black!55}{33.4}} \\
 & & SSIM & .92 & .92 & {\textcolor{black!55}{.86}} & \textbf{.90} & {\textcolor{black!55}{.71}} & \textbf{.86} & {\textcolor{black!55}{.91}} & \textbf{.92} & .92 & .92 & .92 & .92 & {\textcolor{black!55}{.82}} & \textbf{.84} & {\textcolor{black!55}{.76}} & \textbf{.82} & {\textcolor{black!55}{.56}} & \textbf{.78} & .84 & .84 & \textbf{.83} & .83 & .83 & {\textcolor{black!55}{.83}} \\
\cmidrule(lr){2-27}
 & \multirow{3}{*}{$15\times$} & MSD & {\textcolor{black!55}{31.4}} & \textbf{10.4} & {\textcolor{black!55}{43.3}} & \textbf{25.5} & {\textcolor{black!55}{121.2}} & \textbf{62.2} & {\textcolor{black!55}{16.8}} & \textbf{13.3} & {\textcolor{black!55}{11.2}} & \textbf{9.5} & {\textcolor{black!55}{12.7}} & \textbf{10.7} & {\textcolor{black!55}{22.7}} & \textbf{15.5} & {\textcolor{black!55}{47.5}} & \textbf{27.8} & {\textcolor{black!55}{106.6}} & \textbf{47.8} & {\textcolor{black!55}{20.2}} & \textbf{16.2} & {\textcolor{black!55}{14.1}} & \textbf{12.2} & {\textcolor{black!55}{16.2}} & \textbf{14.0} \\
 & & PSNR & {\textcolor{black!55}{34.2}} & \textbf{34.3} & {\textcolor{black!55}{33.0}} & \textbf{33.6} & {\textcolor{black!55}{24.4}} & \textbf{27.7} & {\textcolor{black!55}{33.9}} & \textbf{34.0} & 34.1 & 34.1 & 34.2 & 34.2 & 32.1 & 32.1 & {\textcolor{black!55}{31.5}} & \textbf{31.9} & {\textcolor{black!55}{23.4}} & \textbf{26.8} & \textbf{32.1} & 32.1 & 32.1 & {\textcolor{black!55}{32.0}} & \textbf{31.9} & {\textcolor{black!55}{31.8}} \\
 & & SSIM & .90 & .90 & {\textcolor{black!55}{.85}} & \textbf{.87} & {\textcolor{black!55}{.66}} & \textbf{.77} & {\textcolor{black!55}{.89}} & \textbf{.90} & .90 & .90 & .90 & .90 & .80 & .80 & {\textcolor{black!55}{.75}} & \textbf{.78} & {\textcolor{black!55}{.51}} & \textbf{.68} & \textbf{.81} & .80 & .80 & .80 & .80 & {\textcolor{black!55}{.79}} \\
\midrule
\multirow{6}{*}{Poisson 2D} & \multirow{3}{*}{$8\times$} & MSD & {\textcolor{black!55}{24.5}} & \textbf{8.6} & {\textcolor{black!55}{30.8}} & \textbf{18.0} & {\textcolor{black!55}{70.6}} & \textbf{34.3} & {\textcolor{black!55}{13.1}} & \textbf{10.2} & {\textcolor{black!55}{8.5}} & \textbf{7.1} & {\textcolor{black!55}{10.1}} & \textbf{8.3} & {\textcolor{black!55}{18.5}} & \textbf{12.5} & {\textcolor{black!55}{32.4}} & \textbf{18.5} & {\textcolor{black!55}{62.8}} & \textbf{25.9} & {\textcolor{black!55}{15.6}} & \textbf{12.2} & {\textcolor{black!55}{11.0}} & \textbf{9.3} & {\textcolor{black!55}{14.1}} & \textbf{11.8} \\
 & & PSNR & {\textcolor{black!55}{36.5}} & \textbf{36.8} & {\textcolor{black!55}{33.6}} & \textbf{34.8} & {\textcolor{black!55}{25.7}} & \textbf{30.1} & {\textcolor{black!55}{36.0}} & \textbf{36.2} & {\textcolor{black!55}{36.5}} & \textbf{36.6} & {\textcolor{black!55}{36.7}} & \textbf{36.8} & \textbf{33.7} & {\textcolor{black!55}{33.6}} & {\textcolor{black!55}{32.2}} & \textbf{33.1} & {\textcolor{black!55}{24.3}} & \textbf{29.5} & 33.6 & 33.6 & \textbf{33.5} & 33.3 & 33.3 & {\textcolor{black!55}{33.0}} \\
 & & SSIM & {\textcolor{black!55}{.92}} & \textbf{.93} & {\textcolor{black!55}{.86}} & \textbf{.90} & {\textcolor{black!55}{.69}} & \textbf{.83} & .92 & .92 & .93 & .93 & .93 & .93 & {\textcolor{black!55}{.82}} & \textbf{.84} & {\textcolor{black!55}{.76}} & \textbf{.82} & {\textcolor{black!55}{.55}} & \textbf{.75} & \textbf{.84} & .83 & .83 & .82 & .82 & {\textcolor{black!55}{.81}} \\
\cmidrule(lr){2-27}
 & \multirow{3}{*}{$15\times$} & MSD & {\textcolor{black!55}{34.7}} & \textbf{11.6} & {\textcolor{black!55}{39.2}} & \textbf{24.6} & {\textcolor{black!55}{104.6}} & \textbf{55.1} & {\textcolor{black!55}{17.1}} & \textbf{13.7} & {\textcolor{black!55}{11.9}} & \textbf{10.1} & {\textcolor{black!55}{13.3}} & \textbf{11.3} & {\textcolor{black!55}{23.4}} & \textbf{15.4} & {\textcolor{black!55}{42.1}} & \textbf{26.2} & {\textcolor{black!55}{99.6}} & \textbf{44.3} & {\textcolor{black!55}{19.4}} & \textbf{15.8} & {\textcolor{black!55}{14.1}} & \textbf{12.2} & {\textcolor{black!55}{16.7}} & \textbf{14.4} \\
 & & PSNR & {\textcolor{black!55}{34.9}} & \textbf{35.1} & {\textcolor{black!55}{33.0}} & \textbf{33.6} & {\textcolor{black!55}{24.9}} & \textbf{28.0} & {\textcolor{black!55}{34.2}} & \textbf{34.4} & 34.7 & 34.7 & {\textcolor{black!55}{34.9}} & \textbf{35.0} & {\textcolor{black!55}{32.3}} & \textbf{32.4} & {\textcolor{black!55}{31.5}} & \textbf{32.0} & {\textcolor{black!55}{23.8}} & \textbf{27.7} & {\textcolor{black!55}{32.2}} & \textbf{32.3} & \textbf{32.4} & 32.3 & 32.3 & {\textcolor{black!55}{32.2}} \\
 & & SSIM & .90 & .90 & {\textcolor{black!55}{.84}} & \textbf{.87} & {\textcolor{black!55}{.66}} & \textbf{.77} & {\textcolor{black!55}{.89}} & \textbf{.90} & .90 & .90 & .90 & .90 & .80 & .80 & {\textcolor{black!55}{.74}} & \textbf{.78} & {\textcolor{black!55}{.52}} & \textbf{.68} & .81 & .81 & \textbf{.80} & .80 & .80 & {\textcolor{black!55}{.80}} \\
\bottomrule
\end{tabular}
\end{table*}

\subsection{Practical Multi-Coil Operator}
Although Proposition~\ref{prop:exact} characterizes the ideal projection, $\Pideal$ is impractical for multi-coil MRI. Computing the exact SENSE pseudoinverse $\mathbf{A}^+$ is computationally expensive and numerically unstable in low-sensitivity regions. We therefore use a practical multi-coil consistency lock $\Pmsc$ realized via sensitivity-weighted coil combination, the diffusion-output analogue of classical multi-coil POCS data consistency~\cite{samsonov2004pocsense}:
\begin{equation}
\Pmsc(\mathbf{z};\y)=\mathbf{S}^{H}\mathbf{F}^{-1}\!\left[\mathbf{M}\y+(\mathbf{I}-\mathbf{M})\mathbf{F}\mathbf{S}\mathbf{z}\right],
\label{eq:pmsc}
\end{equation}
The bracketed k-space exactly matches acquired entries before coil combination. After applying $\mathbf{S}^{H}$ and re-encoding, the match becomes approximate because coil sensitivities couple subspaces. We bound this residual leakage next.
 
\begin{proposition}[Practical leakage bound]
\label{prop:bound}
For the operator \eqref{eq:pmsc}, let $\Delta\mathbf{z}=\mathbf{z}_1-\mathbf{z}_2$ and $\mathbf{u}=(\mathbf{I}-\mathbf{M})\mathbf{F}\mathbf{S}\Delta\mathbf{z}$ be the unmeasured multi-coil k-space difference retained by the lock. Then
\begin{equation}
\Pmsc(\mathbf{z}_1;\y)-\Pmsc(\mathbf{z}_2;\y)=\mathbf{S}^{H}\mathbf{F}^{-1}\mathbf{u},
\end{equation}
and re-encoding gives $\mathbf{A}\big(\Pmsc(\mathbf{z}_1;\y)-\Pmsc(\mathbf{z}_2;\y)\big)=\mathbf{C}_{\mathrm{cross}}\mathbf{u}$ with $\mathbf{C}_{\mathrm{cross}}=\mathbf{M}\mathbf{F}\mathbf{S}\mathbf{S}^{H}\mathbf{F}^{-1}$. Hence
\begin{equation}
\begin{aligned}
&\big\|\mathbf{A}\big(\Pmsc(\mathbf{z}_1;\y)-\Pmsc(\mathbf{z}_2;\y)\big)\big\|_2 \\
&\quad \le \|\mathbf{C}_{\mathrm{cross}}\|_2 \big\|(\mathbf{I}-\mathbf{M})\mathbf{F}\mathbf{S}(\mathbf{z}_1-\mathbf{z}_2)\big\|_2.
\end{aligned}
\label{eq:bound}
\end{equation}
\end{proposition}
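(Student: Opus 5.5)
The plan is a direct linear-algebra computation in three steps: cancel the data term, re-encode, and apply the operator-norm inequality.

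\emph{Step 1 (cancellation).} Write \eqref{eq:pmsc} for $\mathbf{z}_1$ and $\mathbf{z}_2$ and subtract. The map $\mathbf{z}\mapsto\Pmsc(\mathbf{z};\y)$ is affine: it is the constant term $\mathbf{S}^{H}\mathbf{F}^{-1}\mathbf{M}\y$ plus the linear map $\mathbf{S}^{H}\mathbf{F}^{-1}(\mathbf{I}-\mathbf{M})\mathbf{F}\mathbf{S}\,\mathbf{z}$. The data contributions $\mathbf{S}^{H}\mathbf{F}^{-1}\mathbf{M}\y$ are identical for both iterates, so they cancel exactly. This is the same mechanism as in the proof of Proposition~\ref{prop:exact}, where the terms $\mathbf{A}^{+}\y$ cancel. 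What remains is $\mathbf{S}^{H}\mathbf{F}^{-1}(\mathbf{I}-\mathbf{M})\mathbf{F}\mathbf{S}\Delta\mathbf{z}=\mathbf{S}^{H}\mathbf{F}^{-1}\mathbf{u}$, which is the first claimed identity.

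\emph{Step 2 (re-encoding).} Apply $\mathbf{A}=\mathbf{M}\mathbf{F}\mathbf{S}$ to this difference. Associativity gives $\mathbf{A}\mathbf{S}^{H}\mathbf{F}^{-1}\mathbf{u}=(\mathbf{M}\mathbf{F}\mathbf{S}\mathbf{S}^{H}\mathbf{F}^{-1})\mathbf{u}=\mathbf{C}_{\mathrm{cross}}\mathbf{u}$.

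\emph{Step 3 (bound).} Use the defining property of the spectral norm, $\|\mathbf{C}_{\mathrm{cross}}\mathbf{u}\|_2\le\|\mathbf{C}_{\mathrm{cross}}\|_2\|\mathbf{u}\|_2$. Substituting the definition of $\mathbf{u}$ yields \eqref{eq:bound}.

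No step is a real obstacle. The only point needing care is checking that $\mathbf{F}$, $\mathbf{M}$ and $\mathbf{S}$ act on consistent multi-coil spaces, so that each product is well defined: $\mathbf{S}:\mathbb{C}^N\to\mathbb{C}^{CN}$, $\mathbf{F}$ acts coil-wise and unitarily, and $\mathbf{M}$ is applied per coil.

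I would add one interpretive remark, although the statement does not require it. Since $\mathbf{u}=(\mathbf{I}-\mathbf{M})\mathbf{u}$, one may replace $\mathbf{C}_{\mathrm{cross}}$ by $\mathbf{M}\mathbf{F}\mathbf{S}\mathbf{S}^{H}\mathbf{F}^{-1}(\mathbf{I}-\mathbf{M})$, which gives a tighter constant. Writing it this way shows that the residual vanishes when $\mathbf{S}\mathbf{S}^{H}$ commutes with $\mathbf{F}^{-1}(\mathbf{I}-\mathbf{M})\mathbf{F}$. The single-coil case with $\mathbf{S}=\mathbf{I}$ is an example: there $\mathbf{M}(\mathbf{I}-\mathbf{M})=\mathbf{0}$, and the ideal behaviour of Proposition~\ref{prop:exact} is recovered.
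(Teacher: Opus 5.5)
Your proposal is correct and matches the paper's proof essentially step for step: you cancel the shared data term $\mathbf{S}^{H}\mathbf{F}^{-1}\mathbf{M}\mathbf{y}$, left-multiply by $\mathbf{A}$ to obtain $\mathbf{C}_{\mathrm{cross}}\mathbf{u}$, and conclude with the operator-norm inequality. Your closing remark is also correct but not needed for the statement: since the binary mask makes $\mathbf{I}-\mathbf{M}$ idempotent, the constant can be tightened to $\mathbf{M}\mathbf{F}\mathbf{S}\mathbf{S}^{H}\mathbf{F}^{-1}(\mathbf{I}-\mathbf{M})$, and this vanishes under the commutation condition you state.
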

\begin{proof}
Subtracting two outputs of the linear operator in \cref{eq:pmsc} cancels the measured term:
\begin{align*}
\Pmsc(\mathbf{z}_1;\y)-\Pmsc(\mathbf{z}_2;\y)
&= \mathbf{S}^{H}\mathbf{F}^{-1}(\mathbf{I}-\mathbf{M})\mathbf{F}\mathbf{S}\Delta\mathbf{z} \\
&= \mathbf{S}^{H}\mathbf{F}^{-1}\mathbf{u}.
\end{align*}
Left-multiplying by $\mathbf{A}=\mathbf{M}\mathbf{F}\mathbf{S}$ gives
\begin{align*}
\mathbf{A}\big(\Pmsc(\mathbf{z}_1;\y)-\Pmsc(\mathbf{z}_2;\y)\big)
&= \mathbf{M}\mathbf{F}\mathbf{S}\mathbf{S}^{H}\mathbf{F}^{-1}\mathbf{u} \\
&= \mathbf{C}_{\mathrm{cross}}\mathbf{u}.
\end{align*}
The stated inequality follows from submultiplicativity of the operator norm and the definition of $\mathbf{u}$.
\end{proof}
Physically, because the measurement terms cancel, post-MSC measured dispersion is independent of the base sampler's original measured-subspace disagreement. It arises only from unmeasured-to-measured cross-coupling through $\mathbf{S}^{H}$, captured by $\mathbf{C}_{\mathrm{cross}}=\mathbf{M}\mathbf{F}\mathbf{S}\mathbf{S}^{H}\mathbf{F}^{-1}$. This residual is intrinsic to sensitivity-weighted coil combination: ESPIRiT maps are root-sum-of-squares normalized, so $\mathbf{S}^{H}\mathbf{S}=\mathbf{I}$, but for $C>1$ coils $\mathbf{S}\mathbf{S}^{H}$ is then a rank-$N$ orthogonal projector on the $CN$-dimensional multi-coil space (not the identity), so $\mathbf{C}_{\mathrm{cross}}\neq\mathbf{0}$ even for perfectly conditioned maps and MSC reduces measured dispersion strongly but not exactly to zero. The same normalization gives $\mathbf{S}^{+}=(\mathbf{S}^{H}\mathbf{S})^{-1}\mathbf{S}^{H}=\mathbf{S}^{H}$, so substituting the SENSE pseudoinverse~\cite{pruessmann1999sense,uecker2014espirit} leaves post-MSC dispersion essentially unchanged and cannot lower a floor set by $\mathbf{S}\mathbf{S}^{H}$. Reducing it would instead require a regularized inverse, which can amplify low-sensitivity noise, so we use $\mathbf{S}^{H}$ by default.

In practice, we apply $\Pmsc$ once, to the sampler's final output: $\x_{\mathrm{out}}\!\leftarrow\!\Pmsc(\x_{\mathrm{out}}^{\mathrm{base}};\y)$ (\textbf{MSC}). It has no trainable parameters, requires no retuning, and costs one forward and inverse Fourier transform.

%% file: sec/4_experiments.tex
\begin{table*}[t]
\centering
\caption{\textbf{OOD brain reconstruction.} Knee prior reconstructs five fastMRI brain contrasts at $R=8,12$. Metrics: MSD ($\times10^{-4}$, $\downarrow$), PSNR (dB, $\uparrow$), SSIM ($\uparrow$). Better of base/+MSC in \textbf{bold} (display ties unbolded).}
\label{tab:brain}
\footnotesize
\renewcommand{\arraystretch}{0.85}
\setlength{\tabcolsep}{1.1pt}
\begin{tabular}{l|l|*{12}{c}|*{12}{c}}
\toprule
\textbf{Contrast} & Metric & \multicolumn{12}{c|}{\textbf{$R = 8$}} & \multicolumn{12}{c}{\textbf{$R = 12$}}\\
\cmidrule(lr){3-14}\cmidrule(lr){15-26}
& & \multicolumn{6}{c}{\emph{Soft}} & \multicolumn{6}{c|}{\emph{Consistent}} & \multicolumn{6}{c}{\emph{Soft}} & \multicolumn{6}{c}{\emph{Consistent}} \\
\cmidrule(lr){3-8}\cmidrule(lr){9-14}\cmidrule(lr){15-20}\cmidrule(lr){21-26}
& & \multicolumn{2}{c}{DPS} & \multicolumn{2}{c}{Nila-DC} & \multicolumn{2}{c}{$\Pi$GDM} & \multicolumn{2}{c}{DDNM} & \multicolumn{2}{c}{MCG} & \multicolumn{2}{c|}{Score} & \multicolumn{2}{c}{DPS} & \multicolumn{2}{c}{Nila-DC} & \multicolumn{2}{c}{$\Pi$GDM} & \multicolumn{2}{c}{DDNM} & \multicolumn{2}{c}{MCG} & \multicolumn{2}{c}{Score} \\
\cmidrule(lr){3-4}\cmidrule(lr){5-6}\cmidrule(lr){7-8}\cmidrule(lr){9-10}\cmidrule(lr){11-12}\cmidrule(lr){13-14}\cmidrule(lr){15-16}\cmidrule(lr){17-18}\cmidrule(lr){19-20}\cmidrule(lr){21-22}\cmidrule(lr){23-24}\cmidrule(lr){25-26}
& & base & MSC & base & MSC & base & MSC & base & MSC & base & MSC & base & \multicolumn{1}{c|}{MSC} & base & MSC & base & MSC & base & MSC & base & MSC & base & MSC & base & MSC \\
\midrule
\multirow{3}{*}{T2} & MSD & {\textcolor{black!55}{59.1}} & \textbf{3.9} & {\textcolor{black!55}{24.8}} & \textbf{6.5} & {\textcolor{black!55}{62.4}} & \textbf{9.0} & {\textcolor{black!55}{5.1}} & \textbf{4.0} & {\textcolor{black!55}{3.6}} & \textbf{3.1} & {\textcolor{black!55}{4.1}} & \textbf{3.4} & {\textcolor{black!55}{70.0}} & \textbf{5.8} & {\textcolor{black!55}{29.5}} & \textbf{9.0} & {\textcolor{black!55}{79.9}} & \textbf{16.4} & {\textcolor{black!55}{7.0}} & \textbf{5.6} & {\textcolor{black!55}{5.3}} & \textbf{4.6} & {\textcolor{black!55}{5.9}} & \textbf{5.0} \\
 & PSNR & {\textcolor{black!55}{23.4}} & \textbf{23.5} & {\textcolor{black!55}{22.7}} & \textbf{22.8} & {\textcolor{black!55}{17.9}} & \textbf{19.7} & 22.3 & 22.3 & 22.7 & 22.7 & 22.9 & 22.9 & 20.1 & 20.1 & {\textcolor{black!55}{19.6}} & \textbf{19.7} & {\textcolor{black!55}{16.8}} & \textbf{17.9} & 19.1 & 19.1 & {\textcolor{black!55}{19.6}} & \textbf{19.7} & 19.8 & 19.8 \\
 & SSIM & .76 & .76 & {\textcolor{black!55}{.69}} & \textbf{.72} & {\textcolor{black!55}{.36}} & \textbf{.52} & .70 & .70 & .72 & .72 & {\textcolor{black!55}{.73}} & \textbf{.74} & {\textcolor{black!55}{.63}} & \textbf{.64} & {\textcolor{black!55}{.58}} & \textbf{.60} & {\textcolor{black!55}{.28}} & \textbf{.39} & .57 & .57 & .61 & .61 & .62 & .62 \\
\midrule
\multirow{3}{*}{FLAIR} & MSD & {\textcolor{black!55}{35.5}} & \textbf{4.7} & {\textcolor{black!55}{24.7}} & \textbf{6.8} & {\textcolor{black!55}{56.1}} & \textbf{9.3} & {\textcolor{black!55}{5.8}} & \textbf{4.7} & {\textcolor{black!55}{4.2}} & \textbf{3.8} & {\textcolor{black!55}{4.8}} & \textbf{4.2} & {\textcolor{black!55}{44.5}} & \textbf{6.3} & {\textcolor{black!55}{26.5}} & \textbf{8.5} & {\textcolor{black!55}{65.0}} & \textbf{14.8} & {\textcolor{black!55}{7.3}} & \textbf{6.0} & {\textcolor{black!55}{5.7}} & \textbf{5.0} & {\textcolor{black!55}{6.3}} & \textbf{5.5} \\
 & PSNR & {\textcolor{black!55}{24.6}} & \textbf{24.7} & {\textcolor{black!55}{24.0}} & \textbf{24.1} & {\textcolor{black!55}{19.0}} & \textbf{20.8} & 23.8 & 23.8 & 24.2 & 24.2 & 24.3 & 24.3 & {\textcolor{black!55}{22.2}} & \textbf{22.3} & {\textcolor{black!55}{21.0}} & \textbf{21.1} & {\textcolor{black!55}{17.8}} & \textbf{19.1} & {\textcolor{black!55}{20.5}} & \textbf{20.6} & 21.8 & 21.8 & 22.0 & 22.0 \\
 & SSIM & .71 & .71 & {\textcolor{black!55}{.63}} & \textbf{.68} & {\textcolor{black!55}{.39}} & \textbf{.52} & .68 & .68 & .70 & .70 & .71 & .71 & .62 & .62 & {\textcolor{black!55}{.52}} & \textbf{.56} & {\textcolor{black!55}{.34}} & \textbf{.43} & .54 & .54 & {\textcolor{black!55}{.60}} & .61 & .61 & \textbf{.62} \\
\midrule
\multirow{3}{*}{T1} & MSD & {\textcolor{black!55}{47.6}} & \textbf{2.8} & {\textcolor{black!55}{16.7}} & \textbf{4.1} & {\textcolor{black!55}{53.4}} & \textbf{6.6} & {\textcolor{black!55}{3.5}} & \textbf{2.7} & {\textcolor{black!55}{2.5}} & \textbf{2.1} & {\textcolor{black!55}{2.8}} & \textbf{2.3} & {\textcolor{black!55}{53.0}} & \textbf{3.7} & {\textcolor{black!55}{17.8}} & \textbf{5.2} & {\textcolor{black!55}{60.2}} & \textbf{10.5} & {\textcolor{black!55}{4.3}} & \textbf{3.3} & {\textcolor{black!55}{3.5}} & \textbf{2.9} & {\textcolor{black!55}{3.7}} & \textbf{3.1} \\
 & PSNR & {\textcolor{black!55}{28.6}} & \textbf{28.8} & {\textcolor{black!55}{27.8}} & \textbf{28.0} & {\textcolor{black!55}{20.7}} & \textbf{22.9} & 27.6 & 27.6 & 27.0 & 27.0 & 27.3 & 27.3 & {\textcolor{black!55}{26.2}} & \textbf{26.3} & {\textcolor{black!55}{25.3}} & \textbf{25.4} & {\textcolor{black!55}{19.3}} & \textbf{20.6} & 24.6 & 24.6 & 23.7 & 23.7 & 24.0 & 24.0 \\
 & SSIM & {\textcolor{black!55}{.81}} & \textbf{.82} & {\textcolor{black!55}{.76}} & \textbf{.81} & {\textcolor{black!55}{.56}} & \textbf{.64} & .80 & .80 & .80 & .80 & .80 & .80 & .75 & .75 & {\textcolor{black!55}{.71}} & \textbf{.74} & {\textcolor{black!55}{.49}} & \textbf{.55} & .71 & .71 & .70 & .70 & .71 & .71 \\
\midrule
\multirow{3}{*}{T1PRE} & MSD & {\textcolor{black!55}{61.3}} & \textbf{4.0} & {\textcolor{black!55}{23.9}} & \textbf{6.1} & {\textcolor{black!55}{78.2}} & \textbf{10.9} & {\textcolor{black!55}{5.2}} & \textbf{4.1} & {\textcolor{black!55}{3.8}} & \textbf{3.2} & {\textcolor{black!55}{4.2}} & \textbf{3.6} & {\textcolor{black!55}{66.2}} & \textbf{5.1} & {\textcolor{black!55}{24.8}} & \textbf{7.6} & {\textcolor{black!55}{85.0}} & \textbf{16.6} & {\textcolor{black!55}{6.4}} & \textbf{5.2} & {\textcolor{black!55}{4.6}} & \textbf{4.1} & {\textcolor{black!55}{5.2}} & \textbf{4.5} \\
 & PSNR & {\textcolor{black!55}{26.8}} & \textbf{27.0} & {\textcolor{black!55}{26.4}} & \textbf{26.6} & {\textcolor{black!55}{18.8}} & \textbf{21.2} & 26.2 & 26.2 & 26.4 & 26.4 & {\textcolor{black!55}{26.5}} & \textbf{26.6} & 24.1 & 24.1 & {\textcolor{black!55}{23.4}} & \textbf{23.5} & {\textcolor{black!55}{17.2}} & \textbf{18.7} & {\textcolor{black!55}{22.9}} & \textbf{23.0} & 22.8 & 22.8 & 23.1 & 23.1 \\
 & SSIM & {\textcolor{black!55}{.77}} & \textbf{.78} & {\textcolor{black!55}{.73}} & \textbf{.77} & {\textcolor{black!55}{.48}} & \textbf{.56} & .76 & .76 & {\textcolor{black!55}{.76}} & \textbf{.77} & .77 & .77 & .69 & .69 & {\textcolor{black!55}{.64}} & \textbf{.68} & {\textcolor{black!55}{.42}} & \textbf{.47} & .66 & .66 & .66 & .66 & .67 & .67 \\
\midrule
\multirow{3}{*}{T1POST} & MSD & {\textcolor{black!55}{122.5}} & \textbf{4.2} & {\textcolor{black!55}{34.0}} & \textbf{5.2} & {\textcolor{black!55}{115.0}} & \textbf{9.5} & {\textcolor{black!55}{4.8}} & \textbf{4.1} & {\textcolor{black!55}{3.8}} & \textbf{3.4} & {\textcolor{black!55}{4.2}} & \textbf{3.7} & {\textcolor{black!55}{122.9}} & \textbf{5.6} & {\textcolor{black!55}{34.7}} & \textbf{6.3} & {\textcolor{black!55}{120.4}} & \textbf{14.9} & {\textcolor{black!55}{5.7}} & \textbf{4.8} & {\textcolor{black!55}{5.3}} & \textbf{4.8} & {\textcolor{black!55}{5.7}} & \textbf{5.0} \\
 & PSNR & {\textcolor{black!55}{27.9}} & \textbf{28.1} & {\textcolor{black!55}{26.7}} & \textbf{26.9} & {\textcolor{black!55}{19.9}} & \textbf{22.2} & {\textcolor{black!55}{26.6}} & \textbf{26.7} & 26.7 & 26.7 & {\textcolor{black!55}{26.8}} & \textbf{26.9} & {\textcolor{black!55}{24.8}} & \textbf{24.9} & {\textcolor{black!55}{24.2}} & \textbf{24.3} & {\textcolor{black!55}{18.2}} & \textbf{19.7} & 23.7 & 23.7 & 22.9 & 22.9 & 23.3 & 23.3 \\
 & SSIM & {\textcolor{black!55}{.79}} & \textbf{.80} & {\textcolor{black!55}{.73}} & \textbf{.78} & {\textcolor{black!55}{.52}} & \textbf{.60} & .77 & .77 & .78 & .78 & .78 & .78 & .71 & .71 & {\textcolor{black!55}{.66}} & \textbf{.70} & {\textcolor{black!55}{.45}} & \textbf{.50} & .68 & .68 & .65 & .65 & .67 & .67 \\
\bottomrule
\end{tabular}
\end{table*}

\begin{figure*}[t]
\centering
\includegraphics[width=0.98\linewidth]{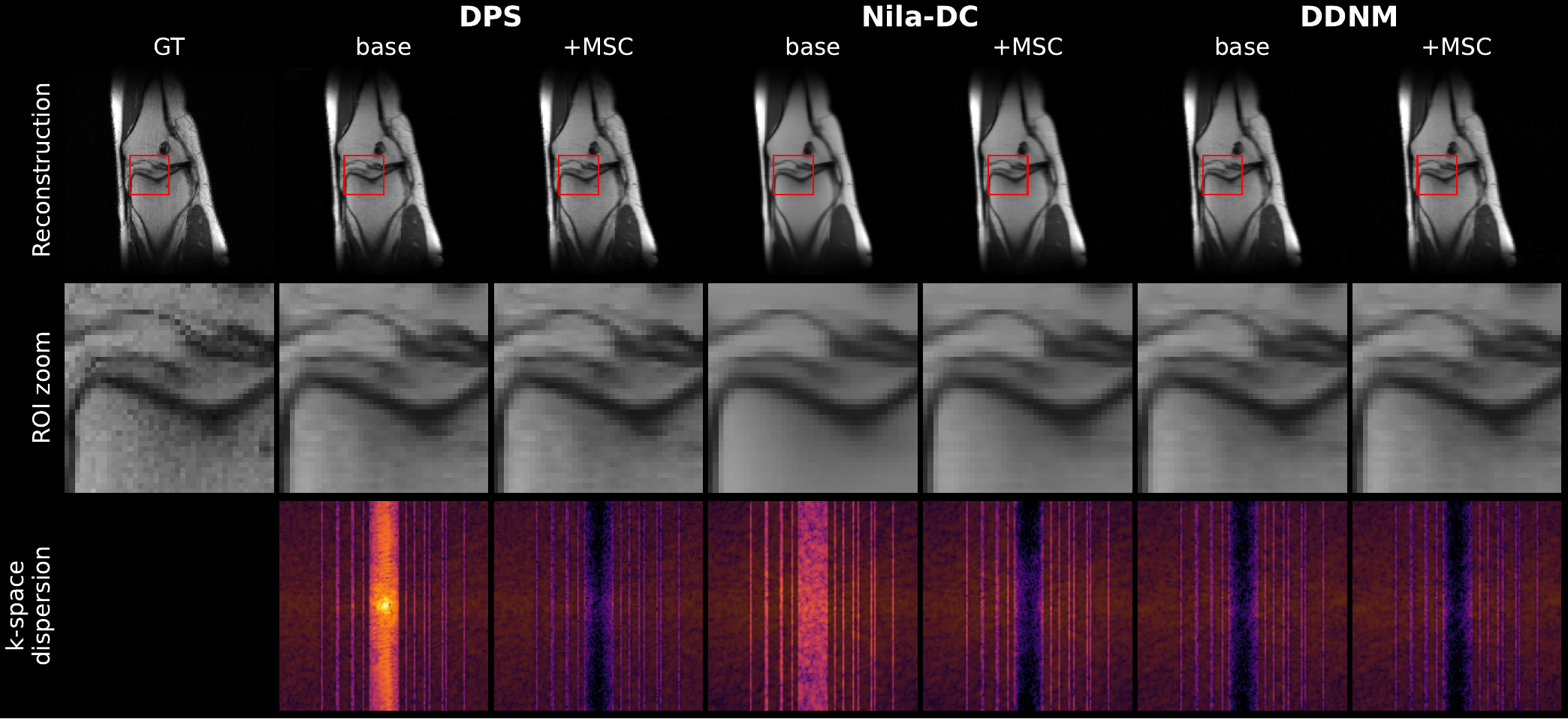}
\caption{\textbf{Qualitative measured-subspace consistency.} In-domain knee PD, Gaussian-1D mask, $R=8$. Columns show GT and base/+MSC pairs for DPS, Nila-DC, and DDNM. Rows show chain-mean reconstruction, ROI zoom, and central k-space posterior dispersion across $L$ samples with unmeasured lines dimmed.}
\label{fig:qualitative}
\end{figure*}

\section{Experiments}
\label{sec:exp}
\subsection{Experimental Setup \& Baselines}
We evaluate on two complementary multi-coil fastMRI~\cite{zbontar2018fastmri} anatomies: axial brain (a high-SNR clinical dataset) and coronal knee, which serves as an anatomy-transfer test. All data are coil-combined with ESPIRiT-estimated sensitivities~\cite{uecker2014espirit}, square-cropped and resized to $256{\times}256$, and scored on a foreground ROI after a single least-squares scale match. The full brain sweep (\cref{tab:brain}) spans five contrasts. The knee sweep covers the PD and PDFS contrasts. Each cell is the median over 20 volumes $\times$ 4 central slices, and each experiment states the contrasts it uses (e.g.\ the downstream task reports T1 and T1POST). Unless otherwise stated, headline analyses use acceleration $R = 8$, $L = 8$ chains, and $T = 300$ reverse steps.

We wrap six base samplers spanning the Soft-to-Consistent spectrum. The three \emph{Soft} bases steer sampling with a soft data term and lack a hard measured-subspace projection: \textbf{DPS}~\cite{chung2023dps}, \textbf{Nila-DC}~\cite{nila2024} (noise-adaptive data consistency), and \textbf{$\Pi$GDM}~\cite{song2023pseudoinverse}. The three bases on the \emph{Consistent} side either embed a hard projection or empirically behave measured-consistently in our audit: \textbf{DDNM}~\cite{wang2023ddnm} (range/null-space split), \textbf{MCG}~\cite{chung2022mcg} (manifold-constrained gradient), and \textbf{Score-MRI}~\cite{chung2022scoremri,song2021solving} (per-step POCS-style k-space replacement). Each base appears beside its deterministic $+$MSC wrap. MSC is not a new reconstructor: reconstruction quality is inherited from the base sampler, while the targeted effect is measured-subspace admissibility. The \textbf{DC-only} variant, a measured lock without stochasticity, serves as a diversity-collapse control.

No base sampler is retuned when MSC is appended. Each base uses its source paper's default guidance and step settings, and the same $L=8$ noise seeds are shared between a base and its $+$MSC wrap, so each comparison is paired and MSC is a deterministic post-map of the final chains. Sensitivity maps use sigpy ESPIRiT rather than the BART reference, so absolute numbers are not directly comparable to the original Nila report. Both priors are unconditional DDPM-style U-Nets~\cite{ho2020ddpm} ($256{\times}256$, $128$ base channels, $1000$-step linear schedule, EMA weights), trained separately per anatomy following the Nila setup~\cite{nila2024}: a multi-contrast brain prior on 52k fastMRI images and a single-contrast PD knee prior on 30.8k fastMRI images. Since each prior is anatomy-specific, applying one to the other is a genuine out-of-distribution test on an anatomy the prior never saw. Headline numbers carry volume-level bootstrap 95\% confidence intervals, and paired significance tests use volume-level Wilcoxon tests with Holm correction where multiple leakage-prone pairs are tested.

\begin{table}[t]
\centering\small
\caption{\textbf{Reversed organ mismatch.} Brain prior reconstructs knee CORPD at $R=8$. Metrics: MSD ($\times10^{-4}$, $\downarrow$), PSNR (dB, $\uparrow$), SSIM ($\uparrow$, leading zero omitted). Better of base/+MSC in \textbf{bold} (display ties unbolded).}
\label{tab:oodknee}
\setlength{\tabcolsep}{4pt}
\begin{tabular}{l|cc|cc|cc}
\toprule
& \multicolumn{2}{c|}{MSD$\downarrow$} & \multicolumn{2}{c|}{PSNR$\uparrow$} & \multicolumn{2}{c}{SSIM$\uparrow$}\\
\cmidrule(lr){2-3}\cmidrule(lr){4-5}\cmidrule(lr){6-7}
& base & MSC & base & MSC & base & MSC\\
\midrule
\multicolumn{7}{l}{\emph{Soft}}\\
DPS & {\textcolor{black!55}{40.2}} & \textbf{5.1} & 28.3 & 28.3 & {\textcolor{black!55}{.78}} & \textbf{.79}\\
Nila-DC & {\textcolor{black!55}{24.3}} & \textbf{7.0} & {\textcolor{black!55}{27.4}} & \textbf{27.5} & {\textcolor{black!55}{.73}} & \textbf{.76}\\
$\Pi$GDM & {\textcolor{black!55}{51.7}} & \textbf{9.3} & {\textcolor{black!55}{22.0}} & \textbf{23.3} & {\textcolor{black!55}{.56}} & \textbf{.63}\\
\midrule
\multicolumn{7}{l}{\emph{Consistent}}\\
DDNM & {\textcolor{black!55}{6.3}} & \textbf{5.0} & 26.8 & 26.8 & .74 & .74\\
MCG & {\textcolor{black!55}{4.5}} & \textbf{4.0} & {\textcolor{black!55}{27.0}} & \textbf{27.1} & .76 & .76\\
Score & {\textcolor{black!55}{5.1}} & \textbf{4.4} & 27.2 & 27.2 & .76 & .76\\
\bottomrule
\end{tabular}
\end{table}

\subsection{Main Reconstruction Results: Soft vs.\ Consistent Split}
\label{sec:structural}
\Cref{tab:knee} (in-domain knee, across masks, accelerations, and the PD/PDFS contrasts) and \cref{tab:brain} (out-of-distribution transfer, where the knee prior reconstructs five brain contrasts) report MSD$\downarrow$ and reconstruction quality (PSNR/SSIM) for every base sampler, comparing each base against its paired $+$MSC column. MSD is the primary admissibility metric, with PSNR/SSIM as quality checks. The main result is a sampler-structure effect: MSC changes samplers that leak in the measured subspace and is nearly an identity map on samplers that already enforce a hard measured-subspace constraint.

\noindent\textbf{Finding 1: MSC selectively reduces MSD on Soft samplers.} Wrapping the leakage-prone Soft bases (DPS, Nila-DC, $\Pi$GDM) cuts MSD by about an order of magnitude across datasets and conditions. In contrast, the already-consistent bases (DDNM, MCG, Score-MRI) move only marginally, matching the fixed point of \cref{prop:exact} rather than a generic re-projection artifact. Volume-level paired tests confirm every leakage-prone MSD drop ($p \le 1.9 \times 10^{-6}$ after Holm correction), and bootstrap confidence intervals for the headline DPS-brain reduction are tight (e.g.\ T2 $15.0\times$ $[14.1,16.1]$, median $16.5\times$ across the five contrasts, up to ${\sim}29\times$ on T1POST). These ratios and their intervals are computed per volume and then aggregated, so they need not equal the quotient of the independently rounded median cells in \cref{tab:brain}. \Cref{fig:qualitative} shows this effect at the sample level: on the Soft bases the dispersion on acquired k-space lines drops sharply under $+$MSC while the reconstruction is visually unchanged, whereas the Consistent DDNM barely moves.

\noindent\textbf{Finding 2: The MSD reduction is measured-subspace consistency, not posterior collapse.} A measured-subspace projection can drive MSD near zero trivially by collapsing the posterior, so MSD must be read together with unmeasured-subspace dispersion (USD). MSC preserves the base sampler's unmeasured diversity while suppressing measured leakage, whereas a hard data-consistency lock without stochasticity collapses it. The ablation (\cref{tab:placement}) quantifies this gap.

\noindent\textbf{Finding 3: Reconstruction quality is preserved or modestly improved.} Across both datasets, MSC holds or improves PSNR and SSIM. The strongest bases change little in PSNR (typically within $\pm0.1$\,dB), while the weakest base, $\Pi$GDM, gains the most in both PSNR and SSIM. Paired tests confirm SSIM gains on every leakage-prone base, including DPS on knee PD and brain T2 ($20/20$ volumes in both, $p = 1.9\times10^{-6}$).

\begin{table}[t]
\centering\small
\caption{\textbf{Ablation on the DPS base.} In-domain knee PD and OOD brain T1POST at $R=8$. Columns toggle sampler, output lock, and SENSE pseudoinverse ($\mathbf{S}^+$); DC-only is the diversity-collapse control. Bold compares base DPS with MSC rows only (display ties unbolded).}
\label{tab:placement}
\renewcommand{\arraystretch}{0.85}
\setlength{\tabcolsep}{3pt}
\begin{tabular}{ccc|cccc}
\toprule
sampler & out & $\mathbf{S}^{+}$ & MSD$\downarrow$ & USD\,(\%)$\uparrow$ & PSNR$\uparrow$ & SSIM$\uparrow$\\
\midrule
\multicolumn{7}{l}{\emph{knee PD} (in-domain)}\\
\cmark &        &        & {\textcolor{black!55}{30.87}} & {\textcolor{black!55}{100\%}}  & {\textcolor{black!55}{30.91}} & {\textcolor{black!55}{0.849}}\\
       & \cmark &        & {\textcolor{black!55}{0.40}}  & {\textcolor{black!55}{7.5\%}}  & {\textcolor{black!55}{30.30}} & {\textcolor{black!55}{0.846}}\\
\cmark & \cmark &        & \textbf{4.73} & 99.3\% & 30.91 & \textbf{0.852}\\
\cmark & \cmark & \cmark & \textbf{4.68} & 98.3\% & 30.91 & \textbf{0.852}\\
\midrule
\multicolumn{7}{l}{\emph{brain T1POST} (OOD)}\\
\cmark &        &        & {\textcolor{black!55}{122.5}} & {\textcolor{black!55}{100\%}} & {\textcolor{black!55}{27.91}} & {\textcolor{black!55}{0.795}}\\
       & \cmark &        & {\textcolor{black!55}{0.2}} & {\textcolor{black!55}{7.6\%}} & {\textcolor{black!55}{27.92}} & {\textcolor{black!55}{0.799}}\\
\cmark & \cmark &        & \textbf{4.2} & 99.7\% & \textbf{28.08} & \textbf{0.798}\\
\cmark & \cmark & \cmark & \textbf{4.2} & 98.7\% & \textbf{28.08} & \textbf{0.798}\\
\bottomrule
\end{tabular}
\end{table}

\subsection{Robustness \& Generalization Under Mismatch}
\label{sec:general}
The robustness experiments vary the axes that could plausibly break a measured-subspace post-map. First, mask type does not require retuning: \cref{tab:knee} includes Uniform 1D, Gaussian-1D, Gaussian-2D, and Poisson-disc masks, and MSC simply locks the samples to whichever measurements were acquired. Second, the pattern persists across acceleration, covering knee PD/PDFS at $R\in\{4,8,12,15\}$ and brain at $R\in\{8,12\}$.

The same organization holds under anatomy and prior mismatch. \Cref{tab:knee} covers knee reconstruction, while \cref{tab:brain} reconstructs brain contrasts using the knee prior. The tables therefore test both in-domain use and out-of-distribution transfer. The trend persists even under a reversed organ mismatch: applying the brain prior to knee data still cuts measured-subspace dispersion on the Soft bases by $3.5$--$7.9\times$ in \cref{tab:oodknee}, while holding or improving SSIM, and leaves the Consistent bases near-unchanged. Finally, the held-out samplers land on their predicted sides: $\Pi$GDM behaves as a Soft sampler, while MCG behaves as a Consistent sampler.

\begin{table}[t]
\centering
\caption{\textbf{OOD downstream brain extraction.} HD-BET Dice ($\uparrow$) and HD95 (mm, $\downarrow$) for DPS base vs.\ +MSC on T1/T1POST at $R=8,12$. The vol. row counts higher-Dice MSC volumes. Better values in \textbf{bold} (display ties unbolded).}
\label{tab:downstream}
\small
\renewcommand{\arraystretch}{0.85}
\setlength{\tabcolsep}{3pt}
\begin{tabular}{l|l|cc|cc}
\toprule
\textbf{Contrast} & Metric & \multicolumn{2}{c|}{$R = 8$} & \multicolumn{2}{c}{$R = 12$} \\
\cmidrule(lr){3-4}\cmidrule(lr){5-6}
& & base & MSC & base & MSC \\
\midrule
\multirow{3}{*}{T1} & Dice$\uparrow$ & {\textcolor{black!55}{.950}} & \textbf{.951} & {\textcolor{black!55}{.933}} & \textbf{.934} \\
 & HD95$\downarrow$ & {\textcolor{black!55}{4.82}} & \textbf{4.76} & {\textcolor{black!55}{6.02}} & \textbf{5.90} \\
 & vol.$\uparrow$ & \multicolumn{2}{c|}{\textbf{14/20}} & \multicolumn{2}{c}{\textbf{14/20}} \\
\cmidrule(lr){1-6}
\multirow{3}{*}{T1POST} & Dice$\uparrow$ & .956 & .956 & {\textcolor{black!55}{.928}} & \textbf{.929} \\
 & HD95$\downarrow$ & {\textcolor{black!55}{4.31}} & \textbf{4.20} & {\textcolor{black!55}{6.42}} & \textbf{6.29} \\
 & vol.$\uparrow$ & \multicolumn{2}{c|}{\textbf{15/20}} & \multicolumn{2}{c}{\textbf{12/20}} \\
\bottomrule
\end{tabular}
\end{table}

\subsection{Ablation Study}
\label{sec:placement}
The ablation separates MSC from two simpler alternatives. On the DPS base (\cref{tab:placement}), MSC lowers MSD in both knee PD and out-of-distribution brain T1POST while keeping nearly all unmeasured diversity ($\geq98\%$). A DC-only measured lock reaches near-zero MSD only by collapsing that diversity ($\leq8\%$), so it is not an admissible uncertainty-preserving substitute. Replacing the adjoint coil combiner $\mathbf{S}^{H}$ with the SENSE pseudoinverse $\mathbf{S}^{+}$ leaves the result essentially unchanged, as expected since RSS-normalized maps give $\mathbf{S}^{+}=\mathbf{S}^{H}$. The small residual reflects coil combination ($\mathbf{S}\mathbf{S}^{H}\neq\mathbf{I}$), not the combiner choice, so the default keeps the simpler $\mathbf{S}^{H}$.

\subsection{Downstream Task Utility \& Computational Overhead}
\label{sec:downstream}
Brain extraction provides a downstream sanity check: removing measured-subspace leakage should not damage task utility. Using the same shared PD knee prior as \cref{tab:knee,tab:brain}, we take the fully sampled brain mask as reference and measure the HD-BET~\cite{isensee2019hdbet} mask on 20 full-volume fastMRI brain cases for T1 and T1POST (\cref{tab:downstream}). The DPS base is already near the mask ceiling (Dice $0.93$--$0.96$), and MSC is neutral: Dice changes by at most $\pm0.001$, while HD95 is unchanged or slightly improved (e.g.\ T1 $6.02\to5.90$, T1POST $6.42\to6.29$\,mm at $R = 12$).

MSC adds one output correction: a forward/inverse FFT and coil combination ($O(C\,N\log N)$ for $C$ coils and $N$ pixels), with zero trainable parameters. On a $256{\times}256$ A/B harness with 8 chains, one application takes 1.09\,ms versus 73.0\,ms for one denoiser reverse step. Applied once after a $T = 300$ trajectory, this is $<0.01\%$ overhead. It requires no retraining and no base-sampler retuning.

%% file: sec/5_conclusion.tex
\section{Conclusion}
\label{sec:conclusion}
We introduced measured-subspace consistency (MSC), a training-free, plug-and-play terminal correction that locks acquired multi-coil k-space at a diffusion posterior sampler's output. Theoretically, we prove exact measured-subspace agreement in the ideal case and bound the cross-subspace coupling residual for the practical sensitivity-weighted operator. Empirically, we demonstrate MSC's selectivity across six base samplers, two anatomies, and out-of-distribution priors: it substantially reduces measured-subspace dispersion for leakage-prone Soft bases while acting as a near-identity map for Consistent ones. By suppressing measured-subspace leakage, MSC preserves legitimate unmeasured diversity and maintains or improves reconstruction quality at negligible cost. It thus redirects posterior variability away from scanner-fixed coefficients and toward unacquired frequencies, ensuring that the sampler's uncertainty reflects what the prior had to infer.

\paragraph{Limitations and Future Work.} First, in the presence of measurement noise ($\y=\mathbf{A}\x+\boldsymbol{\varepsilon}$), locking the acquired k-space introduces a small conservative bias by removing legitimate noise-level variance in the measured subspace. Second, the sensitivity-weighted multi-coil projection leaves a small residual measured-subspace floor because $\mathbf{S}\mathbf{S}^H\neq\mathbf{I}$ couples a fraction of unmeasured energy onto measured lines. Third, our analysis is restricted to 2D Cartesian multi-coil MRI. Extending MSC to non-Cartesian trajectories, 3D acquisitions, and evaluating its prospective clinical utility in downstream diagnostic tasks are important directions for future work.